\newenvironment{proof}
{\paragraph{Proof:}}
{QED\\}
\newcommand{\inflow}{\textrm{inflow}}
\newcommand{\outflow}{\textrm{outflow}}
\begin{document}

\title{Multiple-source single-sink maximum flow in directed planar
             graphs in $O(n^{1.5} \log n)$ time}

\author{Philip N. Klein and Shay Mozes\\Brown University }

\maketitle

\begin{abstract}
  We give an $O(n^{1.5} \log n)$ algorithm that, given a directed
  planar graph with arc capacities, a set of source nodes and a single
  sink node, finds a maximum flow from the sources to the sink . This
 is the first subquadratic-time strongly polynomial algorithm for the
 problem. 
\end{abstract}

\nocite{Weihe97,Hassin81,Frederickson87,IS79,JV82,FF56,FR01,HKRS97,HJ85,Reif83,HKRS97,EK72,Dinitz70}

\section{Introduction}

The study of maximum flow in planar graphs has a long history.  In
1956, Ford and Fulkerson introduced the max $st$-flow problem, gave a
generic augmenting-path algorithm, and also gave a particular
augmenting-path algorithm for the case of a planar graph where $s$ and
$t$ were on the same face (that face is traditionally designated to be
the infinite face).  Researchers have since published many
algorithmic results proving running-time bounds on max $st$-flow for
(a) planar graphs where $s$ and $t$ are on the same face, (b)
undirected planar graphs where $s$ and $t$ are arbitrary, and (c)
directed planar graphs where $s$ and $t$ are arbitrary.  The best
bounds known are (a) $O(n)$~\cite{HKRS97}, (b) $O(n \log
n)$~\cite{Frederickson87}, and (c) $O(n \log n)$~\cite{BorradaileK09}.

Schrijver~\cite{Schrijver02} has written about the history of this problem.
Ford and Fulkerson, who worked at RAND, were apparently motivated by
classified work of Harris and Ross on interdiction
of the Soviet railroad system.  (Of course, Harris and Ross were
interested in the min cut, not the max flow, as seems to be true for
most applications.)  This article was downgraded to unclassified in
1999.  It contains a diagram of a network that models the Soviet
railroad system indicating ``ORIGINS'' (sources) and what is
apparently a sink (marked ``EG'').

\begin{figure}
\centerline{\includegraphics[width=5in]{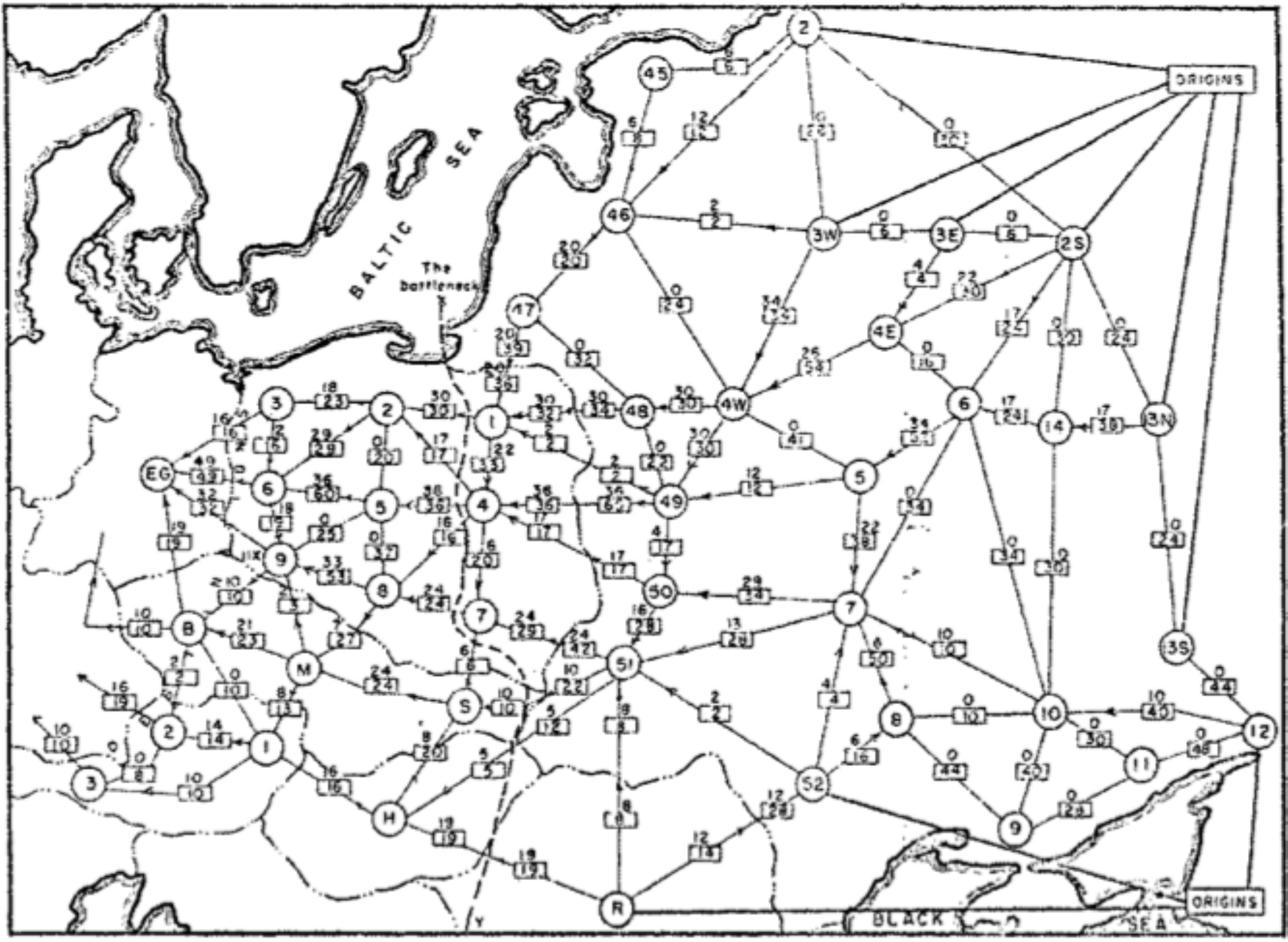}}
\caption{The soviet rail network}
\end{figure}

In max-flow applied to general
graphs, multiple sources presents no problem: one can reduce the
problem to the single-source case by introducing an artificial source
and connecting it to all the sources.  However, as Miller and
Naor~\cite{MN95} pointed out, this reduction violates planarity unless
all the sources are on the same face to begin with.  Miller and Naor
raise the question of computing a maximum flow in a planar graph with
multiple sources and multiple sinks.  Even when there is one sink,
until now the best known algorithm for computing multiple-source
max-flow in a planar graph is to use the reduction in conjunction with
a max-flow algorithm for general graphs.  That is, no
planarity-exploiting algorithm was known for the problem.

There are workarounds.  For example, in the Soviet rail network, there
are two faces that together include all the sources, so solving the
instance can be reduced to solving two single-source max flows in a
planar graph.  However, a more realistic motivation comes from
selecting multiple nonoverlapping regions in a planar structure.  

Consider, for example, the following image-segmentation problem.  A
grid is given in which each vertex represents a pixel, and edges
connect orthogonally adjacent pixels.  Each edge is assigned a cost
such that the edge between two similar pixels has higher cost than
that between two very different pixels.  In addition, each pixel is
assigned a weight.  High weight reflects a high likelihood that the
pixel belongs to the foreground; a low-weight pixel is more likely to
belong to the background.

The goal is to find a partition of the pixels into foreground and
background to minimize the sum 
\begin{eqnarray*}
\lefteqn{\mbox{weight of  {\it background} pixels}}\\
& + & \mbox{cost of
  edges between {\it foreground} pixels and {\em background} pixels}
\end{eqnarray*}
subject to the constraints that, for each component $K$ of foreground
pixels, the boundary of $K$ forms a closed curve in the planar dual
that surrounds all of $K$ (essentially that the component is simply connected).

This problem can be reduced to multiple-source, single-sink max-flow
in a planar graph (in fact, essentially the grid).  For each pixel
vertex $v$, a new vertex $v'$, designated a source, is introduced and
connected only to $v$.  Then the sink is connected to the pixels at
the outer boundary of the grid.

\paragraph{New result}

We prove the following:
\begin{theorem}\label{thm:main}
There is an $O(n^{1.5} \log n)$ algorithm to compute multiple-source,
single-sink max flow in an $n$-node directed planar graph.
\end{theorem}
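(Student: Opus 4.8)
The obstacle to overcome is exactly the one identified above: attaching a single artificial super-source to all of the real sources destroys planarity, so we cannot simply invoke the $O(n\log n)$ single-source, single-sink planar max-flow algorithm (the Borradaile--Klein algorithm) as a black box. My plan is therefore to use planar separators to reduce the multiple-source instance to roughly $\sqrt{n}$ single-source, single-sink instances, each solved by that algorithm; since $\sqrt{n}\cdot O(n\log n)=O(n^{1.5}\log n)$, this meets the target bound.

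First I would compute a simple cycle separator $C$ of size $O(\sqrt{n})$ (Miller's cycle-separator theorem) splitting the graph into an inside $G_{\mathrm{in}}$ and an outside $G_{\mathrm{out}}$, each of at most $2n/3$ vertices, with the sink $t$ placed (without loss of generality) in $G_{\mathrm{out}}$. The crucial geometric observation is that the vertices of $C$ all lie on a single face of $G_{\mathrm{in}}$ and on a single face of $G_{\mathrm{out}}$. Consequently, in each half we may legally attach an artificial terminal to the separator vertices \emph{without} violating planarity: in $G_{\mathrm{in}}$ they serve as a common ``gateway'' face toward which the inside sources route, and in $G_{\mathrm{out}}$ they serve as a common source face feeding the real sink. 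This is precisely the structure the super-source reduction lacks.

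The intended recursion is then: recursively solve the multiple-source, single-sink problem inside $G_{\mathrm{in}}$ with the separator vertices acting as sinks, collapsed to one super-sink on their common face (a multiple-source, single-sink instance of size at most $2n/3$), and combine this with $G_{\mathrm{out}}$, where the separator vertices now act as sources feeding $t$. Because $C$ has only $O(\sqrt{n})$ vertices, the interface is described by only $O(\sqrt{n})$ quantities, namely the net flow crossing each separator vertex, and I would determine these by $O(\sqrt{n})$ single-source max-flow computations on residual graphs. At a subproblem of size $m$ the combination thus costs $O(m^{1.5}\log m)$; since subproblem sizes shrink by a constant factor and this cost is superlinear, the root term dominates and the total is $O(n^{1.5}\log n)$.

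The main obstacle is the combination step. The amount of flow that optimally crosses each separator vertex is a global quantity, so independently maximizing flow on the two sides need not yield a globally maximum flow, and a naive matching may force repeated rerouting. I would attack this using the single-sink structure: all flow has a common destination, which I expect to give a monotonicity property letting the crossing amounts be discovered and corrected by augmentation in the residual graph rather than recomputed from scratch. Making this monotonicity precise, and bounding the number of residual single-source augmentations by $O(\sqrt{n})$ per level while certifying that planarity is preserved throughout, is the delicate heart of the argument; the remainder is routine separator bookkeeping and appeals to the single-source planar max-flow subroutine.
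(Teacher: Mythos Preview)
Your architecture is essentially the paper's: use a cycle separator, recursively route the interior sources to the separator (which lies on a single face, so a super-sink there is planar), and then spend $O(\sqrt{n})$ single-source planar max-flow calls to move that flow from the separator vertices to $t$. Where you stop is exactly where the paper's real idea begins, and the idea is simpler than the ``monotonicity'' you are hoping for.

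The missing concept is to work with \emph{preflows} rather than flows. You are trying to determine, for each separator vertex, the ``right'' amount of flow to cross it, and you correctly note that independently maximizing the two sides need not do this. The paper sidesteps the issue: in the interior it computes a maximum \emph{preflow} into the separator (so excess is allowed to pile up at separator vertices), and then, for each separator vertex $v$ in turn, it runs a \emph{limited} single-source max $vt$-flow in the residual of the \emph{entire original graph}, pushing at most the current excess $\sigma_f(v)$. No matching of inside and outside quantities is ever attempted. Correctness is argued by showing that after the recursive step no admissible (residual with positive residual supply) path from an interior node to the separator survives, and that each limited $vt$-flow cannot create new admissible $u$-to-$t$ paths for other nodes $u$; hence at the end no admissible path to $t$ remains, which certifies a maximum preflow. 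A separate linear-time postprocessing step then cancels excess back to the sources, turning the max preflow into a max flow of the same value.

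Two structural points you should also fix. First, your outline forgets the original sources that lie in $G_{\mathrm{out}}$; only the interior sources get routed. The paper handles this not by binary recursion but by iteratively peeling: at step $i$ it separates $G_{i-1}$, recurses on the enclosed piece $H_i$, and replaces $G_{i-1}$ by the non-enclosed piece $G_i$ (which still contains $t$); this continues until the outside is of constant size, so every source eventually lies in some $H_i$. Second, the limited max-flows from separator vertices are run in the full graph $G_0$, not in $G_{\mathrm{out}}$; this is what lets a later augmentation re-route earlier flow through the interior if needed, and is why the ``no new admissible paths'' lemma (your hoped-for monotonicity) goes through. The total number of separator vertices across all iterations is still $O(\sqrt{n})$ because the $|G_i|$ shrink geometrically, giving the $O(n^{1.5}\log n)$ bound you computed.
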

% The algorithm is implementable.  (An implementation is in progress.)
% The case of a grid (or any $n$-node planar graph such that every node
% has an $O(\sqrt{n})$-hop path to the sink) is particularly simple and
% likely to run fast; much of the complexity of our algorithm lies in
% generalizing to arbitrary planar graphs.

Before our work, the best strongly polynomial bound for the problem is
$O(n^2 \log n)$, which comes from the reduction to general graphs and
then use of an algorithm such as that of Goldberg and
Tarjan~\cite{GT88}.  For integer capacities less than $U$, one could
instead use the algorithm of Goldberg and Rao~\cite{GR98}, which leads
to a running time of $O(n^{1.5} \log n \log U)$, or the
planarity-exploiting min-cost flow algorithm of~\cite{ImaiIwano},
which gives a bound of $O(n^{1.595} \log U$) that depends on fast
matrix multiplication and interior-point methods.  However, even if
one assumes integer capacities and $U=\Theta(n)$, our planarity-exploiting
algorithm is asymptotically faster.

We have learned (personal communication) that Borradaile and Wulff-Nilsen
have independently proved the same theorem.

\subsection{Organization}
The structure of the paper is as follows. In Section~\ref{sec:prel} we
give some definitions and general technical background. In Section~\ref{sec:algo},
we give the main algorithm. Finally, in Section~\ref{sec:preflow} we
describe how to efficiently convert a feasible preflow to a feasible flow.

\section{Preliminaries}\label{sec:prel}

\subsection{Embedded Planar Graphs}

A {\em planar embedding} of a graph assigns each node to a distinct
point onto the sphere, and assigns each edge to a simple arc between the points
corresponding to its endpoints, with the property
that no arc-arc or arc-point intersections occur except for those
corresponding to edge-node incidence in the graph.  A graph is planar
if it has a planar embedding. 

Assume the graph is connected, and
consider the set
of points on the sphere that are not assigned to any node or edge;
each connected component of this set is a {\em face} of the
embedding.

It is convenient to designate one face as the {\em infinite face} (by
analogy to embeddings on the plane).  With respect to a choice of the
infinite face, we say a Jordan curve strictly encloses an edge or node
if the Jordan curve separates the edge or node from the infinite
face.  Similarly, for a subgraph, the
choice of infinite face $f_\infty$ for the whole graph induces a 
choice of infinite face for each connected component of the subgraph,
namely that face of the connected component that contains $f_\infty$.

In implementations, an embedding onto the sphere can be represented
combinatorially, using a {\em rotation system}.

\subsection{Flow}
Let $G$ be a directed graph with arc set $A$, node set $V$ and sink $t$.
For notational simplicity, we assume here and henceforth that $G$ has
no parallel edges and no self-loops.

We associate with each arc $a$ two darts $d$ and $d'$, 
one in the direction of $a$ and the other in the opposite direction.
We say that those two darts are reverses of each other, and write $d = \rev(d')$.

A {\em flow assignment} $f(\cdot)$ is a real-valued function
on darts that satisfies {\em antisymmetry}:
\begin{equation} \label{eq:antisymmetry}
f(\rev(d))= -f(d)
\end{equation}

A {\em capacity assignment} $c(\cdot)$ is a function from darts to
real numbers.  A flow assignment $f(\cdot)$ {\em respects capacities}
if, for every dart $d$, $f(d) \leq c(d)$.  Note that, by antisymmetry,
$f(d) \leq c(d)$ implies $f(\rev(d)) \geq -c(\rev(d))$.  Thus a negative
capacity on a dart acts as a lower bound on the flow on the reverse
dart.  In this paper, we assume all capacities are nonnegative, and
therefore the all-zeroes flow respects the capacities.

For a given flow assignment $f(\cdot)$, the {\em net inflow} (or just {\em inflow}) node $v$ is $\inflow_f(v) = \sum_{a \in A : \head(a) = v}
f(a) - \sum_{a \in A : \tail(a) = v} $. The {\em outflow} of $v$ is
$\outflow_f(v) = - \inflow_f(v)$.  The {\em value} of $f(\cdot)$ is the
  inflow at the sink, $\inflow_f(t)$.

A flow assignment $f(\cdot)$ is said to \emph{obey conservation} if for every node $v$ other than $t$,
$\outflow_f(v) \geq 0$.

A {\em supply assignment} $\sigma(\cdot)$ is a function from the
non-sink nodes to $\mathbb{R} \cup \set{\infty}$. For any node $v$,
$\sigma(v)$ specifies the amount of flow that can originate at $v$. 
A flow assignment $f(\cdot)$ is said to \emph{respect the supplies $\sigma(\cdot)$}
if, for every node $v$ other than the sink $t$, $\outflow_f(v) \leq
\sigma(v)$.  In this paper, we assume all supply values are nonnegative.

A flow assignment is a \emph{feasible preflow} if it respects both capacities
and supplies.  
A feasible preflow is called a \emph{feasible flow} if in addition it obeys conservation. 
In this paper, we give an algorithm to find a maximum (feasible)
preflow, and then an algorithm to convert that preflow to a maximum
(feasible) flow.

The \emph{residual graph} of $G$ with respect to a flow assignment $f(\cdot)$ is the graph $G_f$ with
the same arc-set, node-set and sink, and with capacity assignment $c_f(\cdot)$
and supply assignment $\sigma_f(\cdot)$ defined as follows:
\begin{itemize}
\item  For every dart $d$, $c_f(d) =
c(d) - f(d)$.
\item  For every node $v$, $\sigma_f(v) = \sigma(v) -
\outflow_f(v)$.
\end{itemize}

\subsubsection*{Single-source limited max flow}

For a particular node $s$, a {\em limited max $st$-flow} is a flow
assignment $f(\cdot)$ of maximum value that obeys capacities and for
which $\inflow_f(v)=0$ for every node except $s$ and $t$ and such that
$\outflow_f(s) \leq \sigma(s)$.  An algorithm for ordinary max
$st$-flow can be used to compute limited max $st$-flow by introducing
an artificial node $s'$ and an arc $s' s$ of capacity $\sigma(s)$, and
running the algorithm on the transformed graph.
This transformation preserves planarity.
Since there is an $O(n \log n)$ algorithm for max $st$-flow in a
planar directed graph~\cite{BorradaileK09}, we assume a subroutine for
limited max $st$-flow.

\subsection{Jordan Separators for Embedded Planar Graphs}\label{sec:separator}

For an $n$-node planar embedded simple graph $G$, we define a {\em Jordan separator} to be a Jordan curve
$S$ such that, for any arc $a$ of $G$, the set of points in the sphere
corresponding to $a$ either (i) does not intersect $S$ or (ii) coincides with a
subcurve of $S$.  We require in addition that, if the two endpoints of
$a$ are consecutive nodes on $S$, then (ii) must hold.
The {\em boundary nodes} of $S$ are the nodes $S$ goes through.

We say a Jordan separator is {\em balanced} if
at most $2n/3$ nodes are strictly enclosed by the curve and
at most $2n/3$ nodes are not enclosed.  

Miller~\cite{Miller86} gave a linear-time algorithm that, given a
triangulated two-connected $n$-node planar embedded graph, finds
a simple cycle in the graph, consisting of at most $2\sqrt{2}\sqrt{n}$ nodes, such
that at most $2n/3$ nodes are strictly enclosed by the cycle, and at
most $2n/3$ nodes are not enclosed.

%The nodes intersected by the curve are called {\em boundary nodes} and denoted $V_c$.
To find a balanced Jordan separator in a graph that is not necessarily triangulated or
two-connected, add artificial edges to
triangulate the graph and make it two-connected. 
Now apply Miller's algorithm to find a simple cycle separator with the desired property.
Viewed as a curve in the sphere, the resulting separator $S$ satisfies
the requirements of a balanced Jordan separator, and it has at most
$2\sqrt{2} \sqrt{n}$ boundary nodes.

\section{The Algorithm}\label{sec:algo}

The main algorithm finds a maximum preflow in the
following, slightly more general, setting.
\begin{itemize}
\item {\em Input:}
\begin{itemize}
\item A directed planar embedded graph $G$,
\item a sink node $t$,
\item a nonnegative capacity
assignment $c(\cdot)$, and
\item a nonnegative supply assignment $\sigma(\cdot)$.
\end{itemize}
\item {\em Output:} A feasible preflow $G$ of maximum value.
\end{itemize}

We present the main algorithm as a recursive procedure with calls to a
single-source limited-max-flow subroutine.  We omit discussion of the
base case of the recursion (the case where the graph size is smaller
than a certain constant.)  Each of the recursive
calls operates on a subgraph of the original input graph.  We assume
one global flow assignment $f(\cdot)$ for the original input graph,
one global capacity assignment $c(\cdot)$, and one global supply
assignment $\sigma(\cdot)$.  Whenever the single-source
limited-max-flow subroutine is called, it takes as part of its input 
\begin{itemize}
\item the current residual capacity function $c_f(\cdot)$ and 
\item the current
residual supply function $\sigma_f(\cdot)$.
\end{itemize} 
 It computes a limited
max flow assignment $\widehat f(\cdot)$, and then updates the global flow
assignment $f(\cdot)$ by $f(d) := f(d) + \widehat f(d)$ for every dart
in the subgraph.

In the pseudocode, we do not explicitly mention $f(\cdot)$, $c(\cdot)$
$\sigma(\cdot)$, $c_f(\cdot)$, or $\sigma_f(\cdot)$.

The pseudocode for the algorithm is given below. We assume that the
sink is on the boundary of the face designated the infinite face.

\begin{algorithm}\caption{MultipleSourceMaxPreFlow(graph $G_0$, sink
    $t$) \label{alg:multisource}}
\begin{algorithmic}[1]
\STATE triangulate $G_0$ with zero-capacity edges. 
\STATE $i:=0$
\WHILE{$G_i$ consists of more than $N_0$ nodes}
\STATE $i:=i+1$
\STATE find a Jordan separator $S_i$ in $G_{i-1}$\label{sep}
\STATE let $H_i$ be the subgraph of $G_{i-1}$ enclosed by $S_i$
% that does not contain $t$ 
\STATE let $C_i$ be the external face of $H_i$.
\STATE let $B_i$ be the set of cycles $\{C_j : C_j \textrm{
  is contained in } H_i \}$
\FOR {$C$ in $B_i$}\label{first}
  \STATE designate one of the nodes of $C$ as an artificial sink $t'$
  and add artificial infinite-capacity edges parallel to $C$
  \STATE  MultipleSourceMaxPreFlow($H_i,t'$)\label{push1}
  \STATE remove the infinite-capacity artificial edges
\ENDFOR
\STATE let $G_i$ be the subgraph of $G_{i-1}$ that is not strictly
enclosed by $S_i$\label{Gi}
\ENDWHILE

\FOR {$C$ in $\{C_j\}$}\label{second}
  \FOR {every node $v$ of $C$}\label{inner}
     \STATE limited max-flow from $v$ to $t$ in
     $G$\label{limited}
 \ENDFOR
\ENDFOR
\end{algorithmic}
\end{algorithm}

The algorithm proceeds in iterations as long as the current graph,
$G_i$, 
consists of more than $N_0$ nodes, for some constant $N_0$ to be
specified later. For graphs of constant size, output the solution in
constant time.  At iteration $i$ it finds a
small Jordan separator $S_i$ in $G_{i-1}$ as described in
Section~\ref{sec:separator}. Let $H_i$ be the subgraph of $G_{i-1}$ enclosed
by $S_i$. Intuitively, one would like to think of $S_i$ as the external
face of $H_i$. However, $S_i$ might cross some earlier
$S_j$, so $S_i$ does not entirely coincide with edges of $H_i$ 
(recall that the separator procedure first triangulates the
input). To overcome this technical issue we consider $C_i$, the external face
of $H_i$, instead of just $S_i$.

For every cycle $C_j$ that is contained in $H_i$ we compute a maximum
preflow to $C_j$ in $H_i$.
%This brings excess flow to the nodes of those cycles. 
We then set $G_i$ to be the part of $G_{i-1}$ not strictly enclosed
by $S_i$ and continue to the next iteration.

When all iterations are done, for every node $v$ of every cycle $C_j$
we compute a maximum $v$-to-$t$ flow in $G_0$. 
However, we do not push more than $\sigma(v)$, the excess flow present at $v$. 
We call this step limited max-flow (Line~\ref{limited}).

\subsection{Correctness of Algorithm~\ref{alg:multisource}}

\begin{definition}(Admissible path)
A $u$-to-$v$ path $P$ is called admissible if $\sigma(u) > 0$ and if
$P$ is residual.
\end{definition}

\begin{lemma}\label{lem:A-to-C'}
Fix an iteration $i$ of the while loop. At any time in that iteration after Line~\ref{push1} is executed
for some cycle $C'$, there are no admissible to-$C'$ paths is $H_i$.
\end{lemma}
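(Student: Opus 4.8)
The plan is to combine the optimality of the recursively computed preflow with a persistence argument that carries this optimality through the remainder of iteration $i$.

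First I would reinterpret the recursive call on Line~\ref{push1} for the cycle $C'$. Since one node of $C'$ is designated the artificial sink $t'$ and all of $C'$ is wrapped in infinite-capacity darts, a residual path reaching any node of $C'$ extends, at no capacity cost, to a residual path reaching $t'$, and conversely any flow into $t'$ can be redistributed along $C'$. Hence ``admissible to-$C'$ path in $H_i$'' and ``admissible to-$t'$ path in $H_i$'' coincide at the instant the call returns. By strong induction on the recursion (the instance $H_i$ is enclosed by a balanced separator and is therefore strictly smaller), that call returns a \emph{maximum} feasible preflow to $t'$ in $H_i$ with respect to the current residual capacities and supplies.

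Second, I would establish the preflow analogue of the augmenting-path optimality criterion: a feasible preflow has maximum value to its sink if and only if no node $u$ with positive residual supply $\sigma_f(u) > 0$ can reach the sink by a residual path. I only need the forward direction, which I would prove by contradiction: if such a residual path $P$ existed, then pushing $\min\{\sigma_f(u),\, \min_{d \in P} c_f(d)\} > 0$ units along $P$ yields a feasible preflow of strictly larger value. Combined with the first step, this gives the conclusion \emph{at the moment the call of Line~\ref{push1} returns for $C'$}: there is then no admissible to-$C'$ path in $H_i$.

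Third --- and this is where the real work lies --- I must show the property survives the operations performed later in iteration $i$: the removal of the artificial parallel darts and, more seriously, the subsequent calls of Line~\ref{push1} for the other cycles $C''$ of $B_i$. Removal only deletes arcs, so it cannot create a residual path that was previously absent. For a later push of a preflow $\widehat f$ toward some $C''$, I would argue by contradiction. Suppose it creates a new admissible to-$C'$ path $Q$ from a node $u$ with $\sigma_f(u) > 0$. Residual supplies only decrease, so $u$ had positive residual supply before this push as well, and by the $C'$-saturation just established $u$ could not then reach $C'$; hence $Q$ must traverse some dart $d$ that became residual only because $\widehat f$ routed positive flow along $\rev(d)$. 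I would decompose $\widehat f$ into residual source-to-$C''$ paths, pick one that uses $\rev(d)$, and splice its prefix (residual in the pre-push residual graph) with the suffix of $Q$ to exhibit an admissible to-$C'$ path that already existed before the $C''$-push, contradicting the $C'$-saturation. The main obstacle is exactly this splicing: the suffix of $Q$ may itself use several newly created residual darts, so the naive cut-and-paste need not be residual in the earlier graph. I expect to control this by choosing $d$ to be the \emph{last} newly-residual dart along $Q$, so that the suffix after $d$ is already residual before the push, and by exploiting planarity --- the cycles of $B_i$ bound the holes of the multiply-connected region $H_i$, so a flow path to $C''$ and a residual path to the distinct boundary $C'$ can be uncrossed. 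Making this uncrossing precise, and ensuring the spliced walk yields (or contains) a genuine admissible simple path, is the technical heart of the argument.
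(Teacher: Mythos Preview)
Your overall skeleton --- induction over the inner loop, base case from maximality of the recursive preflow, inductive step by contradiction with a splicing argument --- is exactly the paper's. But there is a real gap in your inductive step.

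You write that ``residual supplies only decrease,'' and from this conclude that the start node $u$ of the hypothetical admissible path $Q$ already had positive residual supply before the push toward $C''$. This is false: the recursive call computes a \emph{preflow}, not a flow, so nodes other than the sink may accumulate excess. For any such node $v$ with net inflow under $\widehat f$, the residual supply $\sigma_f(v)=\sigma(v)-\outflow_f(v)$ strictly \emph{increases}. In particular $u$ may have had $\sigma_f(u)=0$ before the push and $\sigma_f(u)>0$ after. In that scenario the path $Q$ could already have been residual before the push (it just wasn't admissible, because $u$ had no supply), so your inference that $Q$ must contain a newly residual dart collapses. The paper treats exactly this as a separate case: if $Q$ was already residual, then the new excess at $u$ was delivered by $\widehat f$, so flow decomposition of $\widehat f$ yields an admissible $x$-to-$u$ path $R$ (from some $x$ with positive residual supply before the push), and $R\circ Q$ is the pre-existing admissible to-$C'$ path that contradicts the inductive hypothesis.

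For the remaining case (some dart of $Q$ is newly residual), your choice of the \emph{last} such dart $d$ and the splice of a flow-decomposition prefix of $\widehat f$ reaching $\head(d)$ with the suffix $Q[\head(d),\cdot]$ is precisely the paper's argument. Your worries about uncrossing and planarity are unnecessary: the splice is a walk in the pre-push residual graph from a node with positive supply to $C'$, and any such walk contains a simple admissible path. No planar structure is used here.
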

\begin{proof}
By induction on the number of iterations of the loop in Line~\ref{first}. For the base case,
immediately after Line~\ref{push1} is executed for cycle $C'$, by
maximality of the preflow pushed when the edges of $C'$ had infinite
capacity, the lemma holds. 
Assume the lemma holds before Line~\ref{push1}
is executed for cycle $C''$ and let $f$ be the flow pushed in that
execution.
Assume for contradiction that after the execution there
exists an admissible $u$-to-$C'$ path $P$ in $H_i$ for some node $u \in H_i$.

If $P$ was residual before $f$ is pushed, then $\sigma(u)$ must have
been zero at that time. Since $P$ is admissible after he push,
$\sigma(u)>0$ after $f$ is pushed. Therefore, 
before the execution, there must have been an admissible $x$-to-$u$
path $R$ in $H_i$ for
some $x \in H_i$. Thus, 
$R \circ P$ is an admissible $x$-to-$C'$ path in $H_i$ before the execution, a contradiction. 

If $P$ was not residual before $f$ was pushed, there must be 
some dart of $P$ whose reverse is used by $f$.
Let $d$ be the latest such dart in $P$. The fact that $\rev(d)$ is
assigned positive flow by $f$ implies that before $f$ is pushed there
exists an admissible path $Q$ from some node $x \in H_i$ to
$\head(d)$, see Fig.~\ref{fig:A-to-C'}. 
By choice of $d$ this implies that $Q \circ P[\head(d),v]$ is an
admissible $x$-to-$C'$ path in $H_i$ before Line~\ref{push1} is executed for  
cycle $C''$, a contradiction.
\begin{figure}
\centerline{\includegraphics[scale=0.4]{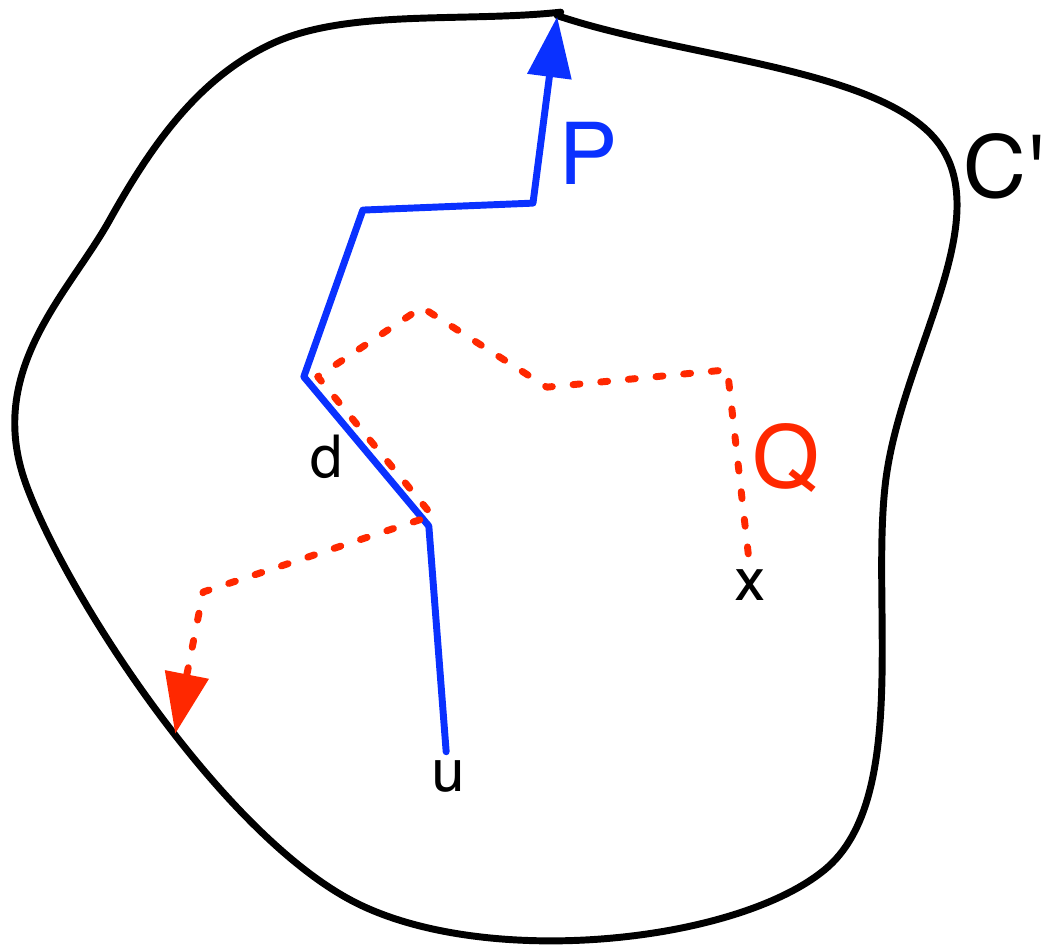}}
\caption{A possible situation in the proof of Lemma~\ref{lem:A-to-C'}.
$P$ is shown in solid blue, $Q$ in dashed red.}
\label{fig:A-to-C'}
\end{figure}
\end{proof}

\begin{lemma}\label{lem:maxpreflow}
Just before the loop in Line~\ref{second} is executed, for every $i$,
there are no admissible to-$B_i$ paths in $H_i$.
\end{lemma}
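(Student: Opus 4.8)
The plan is to prove the statement by induction on the iterations of the while loop, maintaining the invariant that, for every index $i' \le i$, there are no admissible to-$B_{i'}$ paths in $H_{i'}$ from the moment iteration $i'$ finishes onward. Since the loop in Line~\ref{second} executes only after the while loop terminates, establishing this invariant through the end of the while loop immediately yields the claim. The base case of the invariant, for $i' = i$, is essentially Lemma~\ref{lem:A-to-C'}: that lemma gives, at the end of iteration $i$, the absence of admissible to-$C$ paths in $H_i$ for each individual cycle $C \in B_i$, and since $B_i$ is a finite collection of cycles this is exactly the absence of admissible to-$B_i$ paths in $H_i$.

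The substance is the maintenance step: I must show that the pushes performed in a later iteration $j > i'$ do not create an admissible to-$B_{i'}$ path in $H_{i'}$. The key structural observation is that every such push happens inside $H_j$, and $H_j \subseteq G_{i'}$ while $H_{i'}$ meets $G_{i'}$ only in $C_{i'}$ (the external face of $H_{i'}$); hence $H_j \cap H_{i'} \subseteq C_{i'}$, so a push in iteration $j$ can alter the residual graph within $H_{i'}$ only along $C_{i'}$, changing only the residual capacities of darts of $C_{i'}$ and the residual supplies of nodes of $C_{i'}$. I would then mimic the two-case analysis of Lemma~\ref{lem:A-to-C'}. Suppose a push $f$ in $H_j$ first makes some $u$-to-$C'$ path $P$ in $H_{i'}$ admissible, with $C' \in B_{i'}$. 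If $P$ was already residual before $f$, then the residual supply of $u$ must have risen from zero, so $u$ lies on $C_{i'}$ and $f$ drove net flow into $u$; tracing this inflow back yields an admissible path $R$ ending at $u$. Otherwise some dart $d$ of $P$ on $C_{i'}$ was opened by $f$ using $\rev(d)$; taking $d$ to be the last such dart and tracing the flow on $\rev(d)$ yields an admissible path $Q$ ending at $\head(d)$, while the suffix $P[\head(d),C']$ is unchanged. Either way I obtain, exactly as in Lemma~\ref{lem:A-to-C'}, an admissible to-$C'$ path ($R \circ P$ or $Q \circ P[\head(d),C']$) present already before $f$.

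The main obstacle is that this composed path need not lie inside $H_{i'}$: its initial segment $R$ or $Q$ lives in $H_j$, i.e.\ outside $H_{i'}$, so its mere existence does not contradict the induction hypothesis for $H_{i'}$, and the per-$H_{i'}$ invariant is therefore not obviously self-sustaining. Closing this gap is where the planar geometry must enter. The composed path runs from a supplied node in the outer region $H_j$ to the cycle $C' \subseteq H_{i'}$, so by the Jordan-curve property of $C_{i'}$ it must cross $C_{i'}$; I would argue that, because the interior supply of $H_{i'}$ was already driven onto the cycles $B_{i'}$ by the end of iteration $i'$ and because the iteration-$j$ preflow only moves flow outward toward the cycle it is sent to, the inward portion of the composed path can be promoted to a genuinely admissible to-$C'$ path confined to $H_{i'}$ before $f$ (contradicting the induction hypothesis), or else its outward portion yields an admissible to-$C_{i'}$ path in $H_j$ that contradicts Lemma~\ref{lem:A-to-C'} applied within iteration $j$. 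Making this dichotomy precise, and in particular handling the order in which the cycles of the various $B_j$ are processed relative to the push $f$, is the delicate point; it is here that the nesting of the separators $S_{i'}$ and $S_j$ and the outward direction of all the preflows are used essentially, and I expect this confinement argument to be the hardest part of the proof.
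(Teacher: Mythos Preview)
You correctly set up the induction and isolate the key structural fact $H_j \cap H_{i'} \subseteq C_{i'}$, but you overlook the one observation that collapses the entire maintenance step: $C_{i'}$ is itself a member of $B_{i'}$. The paper's proof uses exactly this. Since the push in $H_j$ can alter, within $H_{i'}$, only darts with both endpoints on $C_{i'}$ and only supplies of nodes on $C_{i'}$, any newly admissible $u$-to-$B_{i'}$ path $P$ in $H_{i'}$ with $u \notin C_{i'}$ can simply be truncated at its first node $w$ on $C_{i'}$ (or left whole if it never meets $C_{i'}$). The resulting prefix already ends on $C_{i'} \in B_{i'}$, uses no dart with both endpoints on $C_{i'}$, and starts at a node whose residual supply was unchanged---so it was an admissible to-$B_{i'}$ path in $H_{i'}$ \emph{before} the push, contradicting the inductive hypothesis directly. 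No tracing of $f$ back through $H_j$, no auxiliary paths $R$ or $Q$, and no planar confinement argument are needed.

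The obstacle you flag---that the composed paths $R\circ P$ or $Q\circ P[\head(d),C']$ may leave $H_{i'}$---is genuine for the route you chose, but the paper bypasses it entirely by never leaving $H_{i'}$ in the first place. Your proposed resolution via Jordan-curve crossings and the ``outward direction of all the preflows'' is left as a sketch and, as you yourself say, would be the hardest part; as written it is a real gap. The fix is not to complete that argument but to replace it with the one-line truncation-at-$C_{i'}$ observation above, which is precisely what the clause ``and since $C_i \in B_i$'' in the paper's proof is doing.
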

\begin{proof}
Lemma~\ref{lem:A-to-C'} implies that, for every $i$,  at the end of iteration $i$ of
the while loop, there are no admissible to $B_i$ paths in $H_i$.

Since for $j > i$ $H_i \cap H_j \subseteq C_i $ and since $C_i \in
B_i$, there are no admissible to-$B_i$ paths in $H_i$ at any later iteration as
well.
\end{proof}

\begin{lemma}\label{lem:A-to-t1}
Just before the loop in Line~\ref{second} is executed for the first
time, there are no $v$-to-$t$ admissible paths for any node $v \in G_0 -
\bigcup_j C_j$.
\end{lemma}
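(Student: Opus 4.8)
The plan is to argue by contradiction, reducing a hypothetical admissible $v$-to-$t$ path to an admissible to-$B_i$ path inside one of the peeled pieces $H_i$, which Lemma~\ref{lem:maxpreflow} forbids. So suppose that, at the moment just before Line~\ref{second} runs, some node $v \in G_0 - \bigcup_j C_j$ admits an admissible $v$-to-$t$ path $P$; note we may assume $v \neq t$, since the sink carries no supply and so no admissible path can originate at it. First I would locate $v$ in the nested decomposition built by the while loop. Because $v$ lies on none of the cycles $C_j$, it is never a boundary node, so there are only two possibilities: either some iteration strictly encloses $v$, or none does. In the first case, let $i$ be the (unique) iteration in which $v$ is strictly enclosed by $S_i$, i.e.\ $v \in H_i$ but $v \notin G_i$ (Line~\ref{Gi}); then $v$ lies in the strict interior of $H_i$, whereas the sink $t$, lying on the infinite face, lies strictly outside $H_i$.

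The heart of the argument is a Jordan-curve step. Since $C_i$, the external face of $H_i$, is a closed curve separating the strict interior of $H_i$ from $t$, the path $P$ must meet $C_i$; let $w$ be the first node of $P$ lying on $C_i$. I claim the prefix $P[v,w]$ is a path of $H_i$: every node of $P$ strictly before $w$ is strictly enclosed by $C_i$ and hence lies in the interior of $H_i$, so each dart of $P[v,w]$ joins two interior nodes (or ends at $w\in C_i$) and is therefore a dart of $H_i$. This prefix is residual, being a subpath of the residual path $P$, and $\sigma(v)>0$ because $P$ is admissible; hence $P[v,w]$ is an admissible $v$-to-$C_i$ path in $H_i$. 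Since $C_i \in B_i$, this is an admissible to-$B_i$ path in $H_i$, contradicting Lemma~\ref{lem:maxpreflow}. This disposes of every node strictly enclosed by some separator.

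The remaining case is the node $v$ that \emph{no} iteration strictly encloses; such a $v$ survives into the final working graph (of at most $N_0$ nodes) that remains when the while loop halts, and here the crossing argument does not apply, because no cycle $C_i$ separates such a $v$ from $t$. I expect this to be the main obstacle. The natural route is to invoke the omitted base case---the direct solution on the constant-size core containing $t$---to certify that no admissible $v$-to-$t$ path survives from these nodes; the delicate part is ruling out an admissible path that leaves the core, passes through an already-peeled region $H_i$, and returns to $t$, since such a path cannot be dismissed by a single application of the crossing argument (admissibility constrains only the path's \emph{origin}, which need not lie on a boundary cycle). Reconciling this topological bookkeeping between the global residual graph and the constant-size direct solve, rather than the clean enclosed-node argument above, is where I anticipate the real work lies.
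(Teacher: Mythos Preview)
Your overall approach matches the paper's: place $v$ in its unique piece $H_i$, argue that any $v$-to-$t$ path in $G_0$ must first cross the boundary of that piece, and invoke Lemma~\ref{lem:maxpreflow} on the resulting prefix. However, your crossing step has a gap. You let $w$ be the first node of $P$ on $C_i$ and assert that every earlier node ``is strictly enclosed by $C_i$ and hence lies in the interior of $H_i$''. That inference fails: $H_i$ is the subgraph of $G_{i-1}$ (not of $G_0$) enclosed by $S_i$, and for $j<i$ the interior of $H_j$ has already been deleted from $G_{i-1}$. If the hole left by such an $H_j$ lies entirely inside $S_i$, then $C_j$ sits in the strict interior of $H_i$ (so $C_j\in B_i$), yet the nodes of $G_0$ strictly inside $C_j$ are enclosed by $C_i$ on the sphere while \emph{not} belonging to $H_i$. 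The path $P$ may detour through that hole---entering and leaving via $C_j$---before ever touching $C_i$, so $P[v,w]$ need not be a path of $H_i$ and Lemma~\ref{lem:maxpreflow} does not apply to it. The paper handles precisely this by replacing $C_i$ with $B_i$: it stops at the first node of any $C'\in B_i$, and then the prefix is guaranteed to stay inside $H_i$.

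As for your ``remaining case'' (a node $v$ that no $S_i$ strictly encloses): the paper does not treat this separately at all---it simply asserts a unique $i$ with $v\in H_i$ and proceeds. So this is not where the real work lies; the crossing step above is.
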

\begin{proof}
Let $v$ be a node of $G_0$ that does not belong to any $C_j$. Let $i$ be the
unique index such that $v \in H_i$.
Observe that any $v$-to-$t$  admissible flow path in $G_0$ must visit some node of
$B_i$ before getting to $t$, so it consists of a $v$-to-$C'$ admissible
path in $H_i$ for some $C' \in B_i$, contradicting Lemma~\ref{lem:maxpreflow}
\end{proof}

\begin{lemma}\label{lem:limited}
For any node $u$, if there are no $u$-to-$t$ admissible paths 
before an execution of 
Line~\ref{limited} then there are none after the execution as well.
\end{lemma}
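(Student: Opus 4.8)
The plan is to mimic the structure of the proof of Lemma~\ref{lem:A-to-C'}, but reasoning about the single limited max flow $\widehat f$ pushed by the execution of Line~\ref{limited} (say from node $v$ to $t$). The key structural feature I would exploit is that, unlike the recursive preflow pushes, $\widehat f$ is a genuine $v$-to-$t$ flow: it obeys conservation at every node other than $v$ and $t$. I would first record two consequences. First, since $\outflow_{\widehat f}(w)=0$ for every $w\neq v,t$, pushing $\widehat f$ leaves the residual supply $\sigma(w)$ unchanged for $w\neq v$ and can only decrease $\sigma(v)$; hence no node's residual supply increases. Second, I may assume $\widehat f$ is acyclic, so it decomposes into $v$-to-$t$ paths; every dart carrying positive $\widehat f$-flow then lies on such a path, and because the limited max flow respects the residual capacities, any such dart was already residual before the push.

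I would then argue by contradiction. Suppose $u$ has no admissible $u$-to-$t$ path before the push but acquires one, say $P$, afterwards; in particular $P$ is residual after the push and $\sigma(u)>0$ afterwards, whence (by supply monotonicity) $\sigma(u)>0$ before as well. If $P$ were already residual before the push it would be an admissible $u$-to-$t$ path before, contradicting the hypothesis; so some dart of $P$ must have been made residual by the push, i.e.\ its reverse carries positive $\widehat f$-flow.

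The crux is to recombine pieces that were residual before the push into a $u$-to-$t$ residual path. Here I would choose $d$ to be the \emph{earliest} dart of $P$ whose reverse is used by $\widehat f$ (dually to the ``latest'' choice in Lemma~\ref{lem:A-to-C'}: there one wishes to preserve the endpoint on $C'$ and recover a new start, whereas here the single-source flow lets me preserve the start $u$ via $P$ and the sink $t$ via $\widehat f$ simultaneously). By the choice of $d$, every dart of the prefix $P[u,\tail(d)]$ satisfies $\widehat f(\cdot)\geq 0$ and is residual after the push, hence was residual before. Since $\widehat f(\rev(d))>0$, the dart $\rev(d)$ lies on a $v$-to-$t$ flow path $\pi$ of $\widehat f$, and the suffix $\pi[\tail(d),t]$ consists of darts carrying positive $\widehat f$-flow, hence residual before the push. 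Concatenating, $P[u,\tail(d)]\circ\pi[\tail(d),t]$ is a $u$-to-$t$ walk residual before the push, which together with $\sigma(u)>0$ yields an admissible $u$-to-$t$ path before the push --- contradicting the hypothesis.

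I expect the main obstacle to be bookkeeping the ``before versus after'' residual status of individual darts cleanly, in particular the two one-line facts that (i) a dart that is residual after the push and on which $\widehat f$ is nonnegative was residual before, and (ii) a dart carrying strictly positive $\widehat f$-flow was residual before. The only genuine modeling point to handle carefully is the acyclicity assumption on $\widehat f$ used to produce the flow path $\pi$; this is harmless, since flow cycles can be cancelled without changing the net flow out of $v$ or into $t$.
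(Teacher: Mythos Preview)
Your proposal is correct and follows essentially the same argument as the paper: choose the earliest dart $d$ of $P$ whose reverse carries positive $\widehat f$-flow, and concatenate the prefix $P[u,\tail(d)]$ with a residual $\tail(d)$-to-$t$ path furnished by the flow through $\rev(d)$. Your treatment is in fact slightly more careful than the paper's, since you explicitly justify supply monotonicity (via conservation of $\widehat f$ at $u\neq v,t$) and the existence of the $\tail(d)$-to-$t$ suffix via flow decomposition, whereas the paper leaves these implicit.
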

\begin{proof}
If $\sigma(u) = 0$ before the execution
then $\sigma(u) = 0$ after the execution as well, so there are no
admissible $u$-to-$t$ paths.

Otherwise, there is no $u$-to-$T$ residual path before the execution. 
let $f$ be the $v$-to-$t$ flow pushed at Line~\ref{limited}.
Assume for contradiction that after the execution there exists a
$u$-to-$t$ residual path $P$.
Since $P$ was not residual before $f$ was pushed there must be
some dart of $P$ whose reverse is used by $f$.
Let $d$ be the earliest such dart in $P$. The fact that $\rev(d)$ is
assigned positive flow by $f$ implies that before $f$ is pushed there
exists a residual path $Q$ from $\tail(d)$ to $t$.
By choice of $d$ this implies that $P[\tail(d)] \circ Q$ is a
residual (and therefore admissible) $u$-to-$t$ path before the execution, a contradiction.
\end{proof}

%Lemma~\ref{lem:limited} establishes that the preflow pushed by the
%algorithm is a maximum flow.

We can now prove the correctness of the algorithm claimed in Theorem~\ref{thm:main}.
% \begin{theorem}
% Algorithm~\ref{alg:multisource} computes a maximum preflow to $t$ in $G$.
% \end{theorem} 
\begin{proof}(Of correctness of algorithm in Theorem~\ref{thm:main})
By Lemma~\ref{lem:A-to-t1},  immediately before Line~\ref{second}, 
there are no $v$-to-$t$ admissible paths in $G_0$ for any node $v \in G_0 -
\bigcup_j C_j$.
By Lemma~\ref{lem:limited} there are no such paths
after the loop in Line~\ref{second} terminates.
Since the executions of Line~\ref{limited} eliminate all $v$-to-$t$
admissible paths for $v \in \bigcup_j C_j$, there are no admissible
paths to $t$ in $G_0$ upon termination, so
the flow computed is a maximum preflow.
\end{proof}

\subsection{Running Time of Algorithm~\ref{alg:multisource}}

\begin{lemma}\label{lem:2cases}
Every cycle $C_j$ appears at most twice as the cycle stored by the
variable $C$ in 
the loop in line~\ref{first}.
\end{lemma}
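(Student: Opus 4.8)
The plan is to characterize exactly the iterations $i$ in which a fixed cycle $C_j$ can be the value of the loop variable $C$ in Line~\ref{first}; since that loop ranges over $B_i$, this is precisely the set of indices $i$ with $C_j\in B_i$, i.e.\ with $C_j$ contained in $H_i$. I would first record the trivial appearance: at iteration $i=j$ the cycle $C_j$ is by definition the external face of $H_j$, so $C_j\subseteq H_j$ and hence $C_j\in B_j$. Because $C_j$ consists of boundary nodes of the separator $S_j$, it is not strictly enclosed by $S_j$ and therefore survives into $G_j$, and a priori into later graphs. This is the first of the two cases promised by the lemma's name.

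The crux is to show that $C_j$ can lie in $B_i$ for at most one index $i>j$, and the idea is that the \emph{first} time this happens the cycle is destroyed. Fix the smallest $i>j$ with $C_j\in B_i$, so that $C_j\subseteq H_i$. The key structural observation is that $C_j$ bounds the region that was the strict interior of $H_j$; that region was deleted when $G_j$ was formed in Line~\ref{Gi}, so it contains no nodes of $G_{i-1}$. Since the separator $S_i$ passes only through nodes of $G_{i-1}$, it cannot enter this empty region, so the region lies wholly on one side of $S_i$. Now $C_j$ lies in the closed disk bounded by $S_i$ (that is what $C_j\subseteq H_i$ asserts), while $C_j$ is distinct from the external face $C_i$ of $H_i$: because the region $C_j$ encloses is empty, $C_j$ cannot be the outer boundary of the nonempty enclosed graph $H_i$. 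It follows that the region enclosed by $C_j$ sits strictly inside $S_i$, and hence some node or edge of $C_j$ is strictly enclosed by $S_i$.

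To finish I would invoke the definition of $G_i$ as the part of $G_{i-1}$ not strictly enclosed by $S_i$ (Line~\ref{Gi}): the strictly-enclosed portion of $C_j$ is removed, so $C_j\not\subseteq G_i$. The graphs are nested, $G_{i'}\subseteq G_i$ for every $i'>i$, and $H_{i'}\subseteq G_{i'-1}\subseteq G_i$; thus $C_j\subseteq H_{i'}$ would force $C_j\subseteq G_i$, a contradiction, so $C_j\notin B_{i'}$ for all $i'>i$. Hence there is at most one index $i>j$ with $C_j\in B_i$, which together with the appearance at $i=j$ gives at most two in total.

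I expect the main obstacle to be the middle step, namely the topological claim that $C_j\subseteq H_i$ forces part of $C_j$ to be strictly enclosed by $S_i$. The difficulty is the deliberate mismatch between the idealized Jordan curve $S_i$ and the actual external face $C_i$ of $H_i$: as the text notes, $S_i$ may cross earlier separators and so need not coincide with graph edges. The clean way around this is exactly the hole argument above, since $C_j$ encircles a region emptied at iteration $j$ while $S_i$ traverses only surviving nodes, that region cannot straddle $S_i$, which localizes it strictly inside and rules out $C_j=C_i$. Some care is also needed for the case in which a few nodes of $C_j$ happen to lie on $S_i$: even then the edges of $C_j$ joining them run through the strict interior and are deleted, so $C_j$ still fails to survive into $G_i$.
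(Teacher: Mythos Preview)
Your argument is correct and follows essentially the same two-case structure as the paper: the appearance at $i=j$, and at most one further appearance because membership in some $B_i$ with $i>j$ forces part of $C_j$ to be strictly enclosed by $S_i$ and hence removed when forming $G_i$. The paper simply asserts that case~(2) implies $C_j$ has a dart strictly enclosed by $S_i$, whereas you supply the topological justification via the ``hole'' argument (the region bounded by $C_j$ is empty of nodes of $G_{i-1}$, so $S_i$ cannot cross it). Your added justification that $C_j\neq C_i$---because $C_j$ bounds an empty region while $C_i$ bounds the nonempty $H_i$---fills in exactly what the paper leaves implicit.
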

\begin{proof}
Consider $C_j$. It appears as the cycle stored by the variable $C$ in
the following two cases: 
\begin{enumerate}
\item when $i=j$ (i.e., when $C_j$ is the external face of $H_i$).
\item when $C_j$ is contained by some $H_i$ but is not the external
 face of $H_i$.
\end{enumerate}
Note that case (2) implies that $C_j$ has some dart that is strictly enclosed by $S_i$,
so this can happen for exactly one value of $i$ since the subgraph
strictly enclosed by $S_i$ is not part of $G_{i+1}$. Thus, $C_j$ is not
contained by any $H_i'$ with $i'>i$.
\end{proof}

We first consider the cost of the recursive calls in Line~\ref{push1}.
In both cases in Lemma~\ref{lem:2cases}, the recursive call is on the graph $H_i$, so if $T(n)$
denotes the running time of Algorithm~\ref{alg:multisource} on an
input graph with $n$ nodes, the cycle $C_j$ contributes at most
$T(|H_j|) + T(|H_{p(j)}|)$, where $p(j)$ is the unique value such that
$C_j$ is contained by $H_{p(j)}$ but is not the external face of
$H_{p(j)}$.
Therefore, the total time required by all recursive calls is 
$$\sum_j T(|H_j|) + T(|H_{p(j)}|).$$

Observe that, since for every $i$, $\frac{1}{3}|G_{i-1}| \leq
|H_i| - 2\sqrt2|G_{i-1}| \leq
\frac{2}{3}|G_{i-1}|$, we have $|H_i| < |H_j|$ for $i>j$.
Also note that, if $C_j$ is not the external face of $H_i$ and $H_i$
contains $C_j$, then $i>j$.  Therefore, $\sum_j T(|H_j|) + T(|H_{p(j)}|)$ is
bounded by $\sum_j T(|H_j|) + T(|H_{j+1}|) = T(|H_1|) + 2 \sum_{j\geq 2} T(|H_j|)$.

\begin{lemma}\label{lem:rec1} 
%Assume $|H_j| \geq N_0 = 10^5$ for every $j$. Then, 
$\sum_{j=k_1}^{k_2}|H_j|^{1.5} \leq 0.7|G_{k_1-1}|^{1.5}$.
\end{lemma}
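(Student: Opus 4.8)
The plan is to prove this by strong induction on $k_2 - k_1$, or more naturally by unrolling the recurrence that governs how the sizes $|H_j|$ relate to the shrinking sequence $|G_{j-1}|$. The key structural facts I would extract from the preceding text are: first, that $|H_j| \leq \frac{2}{3}|G_{j-1}| + 2\sqrt{2}\sqrt{|G_{j-1}|}$ (the enclosed part has at most $2n/3$ strictly-enclosed nodes plus the $O(\sqrt{n})$ boundary nodes of the separator); and second, that $|G_j| \leq \frac{2}{3}|G_{j-1}| + 2\sqrt{2}\sqrt{|G_{j-1}|}$ as well, since $G_j$ is the part of $G_{j-1}$ not strictly enclosed by $S_j$, which likewise has at most $2n/3$ non-enclosed nodes plus the boundary. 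For $n$ larger than the base-case constant $N_0$, both bounds give $|H_j|, |G_j| \leq \alpha |G_{j-1}|$ for some constant $\alpha$ strictly less than $1$ (e.g. $\alpha = 0.7$ once $N_0$ is large enough that the $\sqrt{n}$ term is absorbed).

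First I would set $G := |G_{k_1 - 1}|$ and observe that iterating the bound $|G_{j}| \leq \alpha|G_{j-1}|$ gives $|G_{j-1}| \leq \alpha^{\,j - k_1}\,G$ for each $j \geq k_1$. Combining this with $|H_j| \leq \alpha|G_{j-1}|$ yields $|H_j| \leq \alpha^{\,j - k_1 + 1} G$, so that
\begin{equation}
\sum_{j=k_1}^{k_2} |H_j|^{1.5} \leq \sum_{j=k_1}^{k_2} \bigl(\alpha^{\,j-k_1+1} G\bigr)^{1.5} = G^{1.5}\,\alpha^{1.5}\sum_{m=0}^{k_2-k_1}\alpha^{1.5\,m} \leq G^{1.5}\,\frac{\alpha^{1.5}}{1 - \alpha^{1.5}}.
\end{equation}
It then suffices to check that the constant $\dfrac{\alpha^{1.5}}{1-\alpha^{1.5}}$ is at most $0.7$; with $\alpha = 0.7$ one has $\alpha^{1.5} \approx 0.585$, and $0.585/0.415 \approx 1.41$, which is too large, so the naive geometric bound with exponent $1.5$ does not immediately close. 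This signals that I must be more careful about the constant $\alpha$.

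The hard part will be pinning down the right shrink factor so that the geometric sum actually collapses to $0.7$. The fix is to use the fact that the exponent is $1.5$ rather than $1$: what we need is not $\alpha$ small but $\alpha^{1.5}$ small enough that $\alpha^{1.5}/(1-\alpha^{1.5}) \leq 0.7$, i.e. $\alpha^{1.5} \leq 7/17 \approx 0.41$, i.e. $\alpha \lesssim 0.56$. Since the balanced-separator guarantee only gives $\alpha$ close to $\frac{2}{3}$, the clean geometric argument above is slightly too weak, and the genuine obstacle is that the two largest terms $|H_{k_1}|$ and $|H_{k_1+1}|$ together can approach $\left(\frac{2}{3}\right)^{1.5}(1 + \frac{2}{3})^{1.5}\cdots$ of $G^{1.5}$. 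I would therefore strengthen the analysis by exploiting that both children of a separator step are bounded by $\frac{2}{3}$ of the parent \emph{and} that their sizes are complementary: whatever mass goes into the strictly-enclosed side is removed from $G_j$, so along a single root-to-leaf path the sizes $|H_{k_1}|, |H_{k_1+1}|, \dots$ are each at most $\frac23$ of the running remainder. Concretely, I expect to prove the sharper per-term bound $|H_j| \le \frac{2}{3}\,|G_{j-1}|$ and $|G_j| \le |G_{j-1}| - |H_j| + O(\sqrt{|G_{j-1}|})$ (boundary nodes shared), feed these into $\sum_j |H_j|^{1.5} \le (\max_j |H_j|)^{0.5}\sum_j |H_j| \le (\tfrac23 G)^{0.5}\cdot G$, and verify $(\tfrac23)^{0.5} = 0.816$ still overshoots — so the truly essential ingredient is the superadditivity slack of the $1.5$ power together with the constant $N_0$ chosen large enough to swallow every $O(\sqrt n)$ error term, after which a direct numerical check of the resulting geometric-type sum against $0.7$ finishes the proof.
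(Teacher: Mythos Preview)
Your proposal does not close. You correctly identify every structural ingredient the argument needs---the bound $|H_j| \le \tfrac{2}{3}|G_{j-1}| + O(\sqrt{|G_{j-1}|})$, the analogous bound on $|G_j|$, and crucially the complementarity $|G_j| \le |G_{j-1}| - |H_j| + O(\sqrt{|G_{j-1}|})$---but you never assemble them into a proof. Both concrete attempts fail numerically, as you yourself observe: the pure geometric sum with ratio $\alpha \approx 2/3$ gives $\alpha^{1.5}/(1-\alpha^{1.5}) \approx 1.3$, and the H\"older-type bound $(\max_j |H_j|)^{0.5}\sum_j |H_j|$ gives $(2/3)^{0.5} \approx 0.82$. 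The final sentence (``superadditivity slack \ldots\ direct numerical check'') is not an argument.

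What you are missing is that the induction on $k_2 - k_1$ you mention in your first sentence must be carried out \emph{with the constant $0.7$ baked into the inductive hypothesis}, and that the inductive step must use complementarity rather than the crude bound $|G_j| \le \alpha|G_{j-1}|$. Concretely: peel off the first term and apply the inductive hypothesis to the tail to get
\[
\sum_{j=k_1}^{k_2} |H_j|^{1.5} \;\le\; |H_{k_1}|^{1.5} + 0.7\,|G_{k_1}|^{1.5}.
\]
Now write $|H_{k_1}| \approx \theta\,|G_{k_1-1}|$ and $|G_{k_1}| \approx (1-\theta)\,|G_{k_1-1}|$ with $\theta \in [1/3,2/3]$ (up to the $O(\sqrt{n})$ boundary, absorbed by $N_0$). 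The function $\theta \mapsto \theta^{1.5} + 0.7(1-\theta)^{1.5}$ is convex, hence maximized on $[1/3,2/3]$ at an endpoint; at $\theta = 2/3$ it equals $(2/3)^{1.5} + 0.7(1/3)^{1.5} \approx 0.544 + 0.135 = 0.679 < 0.7$, which closes the induction. Both of your failed attempts discard the factor $0.7$ on the recursive tail, and that factor is exactly what creates the room the argument needs.
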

\begin{proof}
By induction on $k_2 - k_1$. Recall that, for every $j$, $|H_j| \leq
\frac{2}{3} |G_{j-1}| + 2 \sqrt{2|G_{j-1}|}$. By inspection, for $k_2 =
k_1$, 
$ ( \frac{2}{3} |G_{k_1-1}| + 2 \sqrt{2|G_{k_1-1}|} ) ^{1.5}
\leq 0.7|G_{k_1-1}|$ provided $|G_{k_1-1}| > N_0 = 10^5$.

Assume the claim holds for $k_2 - k_1 = k-1$.
\begin{eqnarray*}
\sum_{j=k_1}^{k_1+k}|H_j|^{1.5} & \leq & 
|H_{k_1}|^{1.5} + \sum_{j=k_1+1}^{k_1+1+k-1}|H_j|^{1.5} \\
& \leq & |H_{k_1}|^{1.5} + 0.7|G_{k_1}|^{1.5} \\
& \leq & \left(  \theta|G_{k_1-1}| + 2\sqrt{2|G_{k_1-1}|}
\right)^{1.5} + 0.7 \left(  (1-\theta)|G_{k_1-1}| + 2\sqrt{2|G_{k_1-1}|}
\right)^{1.5},  
\end{eqnarray*}
where $\theta$ is the balance parameter of the separator $S_{k_1}$.
In the first inequality we have used the inductive assumption.
Using the convexity of the above expression, it can be bounded by
setting $\theta=\frac{2}{3}$, which satisfies the lemma provided that
all graphs have at least $N_0=10^5$ nodes. 
\end{proof}

\begin{lemma}\label{lem:rec}
Assume $T(n) \leq \alpha_1 n^{1.5} \log n$ for every $ N_0 \leq n < |G_0|$. Then,
$T(|H_1|) + 2\sum_{j \geq 2} T(|H_j|) < 0.98 \alpha_1 |G_0|^{1.5} \log |G_0|$ 
\end{lemma}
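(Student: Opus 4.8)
The plan is to prove Lemma~\ref{lem:rec} by combining the recursive cost bound already derived with the summation bound from Lemma~\ref{lem:rec1}, while carefully tracking the logarithmic factor. We want to show that $T(|H_1|) + 2\sum_{j \geq 2} T(|H_j|) < 0.98\,\alpha_1 |G_0|^{1.5}\log|G_0|$ under the inductive hypothesis that $T(n) \leq \alpha_1 n^{1.5}\log n$ for all $N_0 \leq n < |G_0|$.

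First I would substitute the inductive hypothesis into the left-hand side. Since every $H_j$ has fewer than $|G_0|$ nodes (indeed $|H_1| \leq \frac{2}{3}|G_0| + 2\sqrt{2|G_0|} < |G_0|$), each term $T(|H_j|)$ is bounded by $\alpha_1 |H_j|^{1.5}\log|H_j|$. The key simplification is that $\log|H_j| \leq \log|G_0|$ for every $j$, so I can factor the logarithm out of the sum entirely. This reduces the problem to bounding $\alpha_1 \log|G_0| \cdot \left( |H_1|^{1.5} + 2\sum_{j\geq 2}|H_j|^{1.5} \right)$.

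Next I would apply Lemma~\ref{lem:rec1} with $k_1 = 1$, which gives $\sum_{j\geq 1}|H_j|^{1.5} \leq 0.7|G_0|^{1.5}$, and hence $\sum_{j\geq 2}|H_j|^{1.5} \leq 0.7|G_0|^{1.5} - |H_1|^{1.5}$. Plugging this in, the parenthesized quantity becomes $|H_1|^{1.5} + 2(0.7|G_0|^{1.5} - |H_1|^{1.5}) = 1.4|G_0|^{1.5} - |H_1|^{1.5}$. The troublesome feature is the constant $1.4$, which exceeds the target $0.98$; to recover the desired bound I must exploit the fact that $|H_1|^{1.5}$ is subtracted. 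Using that $S_1$ is balanced, so $|H_1| \geq \frac{1}{3}|G_0|$, I would lower-bound $|H_1|^{1.5} \geq (\frac{1}{3})^{1.5}|G_0|^{1.5} \approx 0.19|G_0|^{1.5}$, which alone is not enough. The real leverage comes from applying Lemma~\ref{lem:rec1} more sharply: I would instead split off the $j=1$ term from the start and apply Lemma~\ref{lem:rec1} with $k_1 = 2$ to bound $\sum_{j\geq 2}|H_j|^{1.5} \leq 0.7|G_1|^{1.5}$, where $|G_1| \leq \frac{2}{3}|G_0| + 2\sqrt{2|G_0|}$, giving the parenthesized quantity a bound of $|H_1|^{1.5} + 1.4|G_1|^{1.5}$, each term a fraction of $|G_0|^{1.5}$.

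The main obstacle I anticipate is the constant-chasing: verifying that the resulting numerical coefficient of $|G_0|^{1.5}$ falls below $0.98$ requires bounding both $|H_1|^{1.5}$ and $|G_1|^{1.5}$ by their worst-case balance-parameter values and absorbing the lower-order $2\sqrt{2|G_0|}$ correction terms, which is where the hypothesis $N_0 = 10^5$ is used (exactly as in Lemma~\ref{lem:rec1}). Once the arithmetic confirms $|H_1|^{1.5} + 1.4|G_1|^{1.5} < 0.98|G_0|^{1.5}$ for $|G_0| > N_0$, multiplying back by $\alpha_1\log|G_0|$ completes the proof. The delicate point is ensuring the bound on the bracketed sum of $1.5$-powers is strictly below $0.98$ with room to spare, since the logarithmic factor was handled for free and contributes no slack of its own.
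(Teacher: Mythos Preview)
Your proposal is correct and, after the brief detour through the $k_1=1$ application of Lemma~\ref{lem:rec1}, lands on exactly the paper's argument: factor out $\log|G_0|$, apply Lemma~\ref{lem:rec1} with $k_1=2$ to bound $\sum_{j\geq 2}|H_j|^{1.5}\leq 0.7|G_1|^{1.5}$, then control $|H_1|^{1.5}+1.4|G_1|^{1.5}$ via the balance parameter $\theta_1$ (the paper uses convexity to identify $\theta_1=\tfrac{1}{3}$ as the worst case) together with the $N_0=10^5$ assumption to absorb the $2\sqrt{2|G_0|}$ terms.
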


\begin{proof}

Let $\theta_1$ denote the balance parameter for separator $S_1$. 
\begin{eqnarray*}
T(|H_1|) + 2\sum_{j \geq 2} T(|H_j|) & \leq & \alpha_1
|H_1|^{1.5}\log|G_0| + 2 \alpha_1 \log|G_0| \sum_{j \geq 2}
|H_j|^{1.5} \\
& \leq & \alpha_1
|H_1|^{1.5}\log|G_0| + 2 \cdot 0.7 \alpha_1 \log|G_0| |G_1|^{1.5} \\
& \leq & \alpha_1
(\theta_1 |G_0| + 2 \sqrt{2|G_0|}) ^{1.5}\log|G_0| + \\
& & 2 \cdot 0.7
\alpha_1 \log|G_0| \left( (1-\theta_1)|G_0| + 2\sqrt{2|G_0|} \right)^{1.5},
\end{eqnarray*}
where in the second inequality we have used Lemma~\ref{lem:rec1}.
Using the convexity of the above expression, it can be bounded by
setting $\theta=\frac{1}{3}$, which yields the desired bound $0.98 \alpha_1
|G_0|^{1.5} \log |G_0|$ provided $|G_0| \geq N_0$. 
% Let $\theta_i$ denote the balance parameter for separator $S_i$. That
% is, $\frac{1}{3} \leq \theta_i \leq \frac{2}{3}$, $|H_i| =
% \theta_i|G_{i-1}| + 2\sqrt{2|G_{i-1}|}$ and $|G_i| = (1-\theta_i)|G_{i-1}|+ 2\sqrt{2|G_{i-1}|}$. 
% Therefore, 
% \begin{eqnarray*}
% T(|H_1|) + 2\sum_{j \geq 2} T(|H_j|)   & \leq &  
% \alpha_1|G_0|^{1.5} \log|G_0| \left(  \theta_1^{1.5} + 2(1 - \theta_1)^{1.5}
%  \theta_2^{1.5} + 2(1 - \theta_1)^{1.5}(1-\theta_2)^{1.5}
%  \theta_3^{1.5} + \dots \right)  \\ & \leq &
% \alpha_1|G_0|^{1.5} \log |G_0| \left(  \theta_1^{1.5} + 2(1 - \theta_1)^{1.5} \left[ 
% \theta_2^{1.5} + (1-\theta_2)^{1.5}\theta_3^{1.5} + \dots \right]
% \right)  \\ & \leq &
% \alpha_1|G_0|^{1.5} \log |G_0| \left(  \theta_1^{1.5} + 2(1 - \theta_1)^{1.5} 
% \theta^{1.5}\sum_{j\geq0} {(1-\theta)^{1.5}}^j  \right) \\ & \leq & 
% \alpha_1|G_0|^{1.5} \log |G_0|\left(  \theta_1^{1.5} + 2(1 - \theta_1)^{1.5} 
% \frac {\theta^{1.5}}{1 - (1-\theta)^{1.5}}  \right) 
% \end{eqnarray*}
% where the third inequality follows from the symmetry between
% $\theta_i$ for $i\geq 2$.
% This quantity is maximized for $\theta_1 = \frac{1}{3}$ and $\theta =
% \frac{2}{3}$ yielding 
% $T(|H_1|) + 2\sum_{j \geq 2} T(|H_j|) < 0.9263 \cdot \alpha_1 |G_0|^{1.5}
% \log |G_0|$ 
\end{proof}

\begin{lemma}\label{lem:nodes-of-Ci}
for every $i$, every node of $C_i$ belongs to some Jordan separator $S_j$.
\end{lemma}
\begin{proof}
$C_i$, the external face of $H_i$ consists of nodes that either belong to
$S_i$ or to a face of $G_{i-1}$ that is not triangulated. To see that,
consider a clockwise traversal of $S_i$. For every two consecutive
nodes of $S_i$ that are connected in $G_{i-1}$ by an edge, $S_i$
coincides with that edge (see Section~\ref{sec:separator}), so it belongs to $C_i$. The only parts of
$C_i$ and $S_i$ that do not coincide correspond to $S_i$ crossing
some non-triangulated face $f$ of $G_{i-1}$, say at nodes $u$ and $v$. In those cases, $C_i$ 
consists of the clockwise subpath $f$ between $u$ and $v$.
Since $G_0$ is triangulated in the first line of the algorithm and
Line~\ref{Gi},  
every face of $G_{i-1}$ that is not triangulated
corresponds to regions that were strictly enclosed by previous separators, or
more formally, to a union of the portions of $G_0$ that are strictly
enclosed by some Jordan separators in $\{S_j: j<i$\}. Therefore, every
node on these faces belongs to some Jordan separator $S_j$, which
proves the lemma.
\end{proof}

% \begin{lemma}\label{lem:sum-Ci}
% $\sum_{i=1}^k |C_i| \leq |S_k| + 2\sum_{i=1}^{k-1} |S_i|$.
% \end{lemma}
% \begin{proof}
% For every $j<k$, 
% every node of $C_j$ that does not belong to $S_j$ is strictly enclosed
% by $S_j$. Therefore it does not belong to any $G_{j'}$ with $j' \geq
% j$. By Lemma~\ref{lem:nodes-of-Ci}, every node of $C_j$ belongs to
% some $S_i$. It follows that for every $i<k$, every node $S_i$ is
% counted at most twice, which proves the lemma. 
% \end{proof}

We can now put together the pieces to prove the running time stated in
Theorem~\ref{thm:main}.

% \begin{theorem}
% The running time of Algorithm~\ref{alg:multisource} is $O(n^{1.5} \log n)$.
% \end{theorem}
\begin{proof}(Of running time in Theorem~\ref{thm:main}. )
We have already argued that the time required for all recursive calls
is bounded by $\sum_j T(|H_j|) + T(|H_{p(j)}|)$. 
The work done outside the recursive calls is dominated by the
single-source single-sink flow computations in
Line~\ref{limited}. Each of these computations takes $O(|G_0| \log
|G_0|)$ time. 
% By Lemma~\ref{lem:2cases}, every node of every cycle
% $C_j$ is the node $v$ in at most two such computations. 
The overall time required for the non-recursive calls is thus 
$O \left( \left| \bigcup_j C_j \right||G_0| \log |G_0| \right)$. By lemma~\ref{lem:nodes-of-Ci} this is 
$O (\sum_j |S_j||G_0| \log |G_0|)$.
Since $|S_i|$ is $O(\sqrt{|G_{i-1}|})$ and since the size of the
$G_i$s decreases exponentially, we have
\begin{eqnarray*} 
O \left( \sum_j |S_j||G_0| \log |G_0| \right) & \leq & 
\alpha_2 |G_0| \log |G_0| \sum_j \sqrt{|G_{j-1}|} \\
& \leq &  \alpha_2' |G_0|^{1.5} \log |G_0|
\end{eqnarray*}
for some constants $\alpha_2, \alpha_2'$.

The overall running time is therefore bounded by 
$$T(|G_0|) \leq \sum_j T(|H_j|) + T(|H_{p(j)}|) + \alpha_2' |G_0|^{1.5} \log
|G_0|.$$
Assume inductively $T(n) \leq \alpha_1 |G_0|^{1.5} \log |G_0|$ for some
constant $\alpha_1$. Then, by Lemma~\ref{lem:rec}, $T(|G_0| \leq 0.98 \alpha_1
|G_0|^ {1.5} \log |G_0| + \alpha_2' |G_0|^{1.5} \log |G_0|$ which at most
$\alpha_1 |G_0|^{1.5} \log |G_0|$ for appropriate choice of $\alpha_1$.
\end{proof}

\section{Converting a maximum feasible preflow into a maximum flow}\label{sec:preflow}
In this section we describe a linear time algorithm that, given a
feasible preflow in a planar graph, converts it into a feasible flow
of the same value.  This algorithm can be used to convert the maximum
preflow output by Algorithm~\ref{alg:multisource} into a maximum flow.
This section contains no novel ideas and is included for
completeness. A similar procedure was used in \cite{JV82}, but was not
described in detail.

First, use the technique of Kaplan and Nussbaum~\cite{KaplanNussbaum2009} to make the preflow
acyclic. The running time of this step is dominated by a shortest paths computation in the dual of
the residual graph. This can be done in
$O(n \log n)$ using Dijkstra, or in linear time using~\cite{HKRS97}.

Let $f$ denote the acyclic feasible maximum preflow in $G$.
%Let $p(d)$ denote the amount of flow on a dart $d$ 
Let $p(v)$ denote the net inflow of node $v$.
%The fact that $f$ is a feasible preflow implies $p(v) \geq -\sigma(v)$ for any
%node.
Let $D$ denote the DAG induced by arcs with $f(d)>0$. 
Reverse every arc of
$D$ and compute a topological
order on the nodes of $D$. The following algorithm pushes back flow
from nodes with positive net inflow to the sources and runs in linear time. Upon
termination, $f$ is a feasible maximum flow. 
\begin{algorithm}\caption{An algorithm that converts acyclic preflow $p$ on a DAG $D$ into a flow.}
\begin{algorithmic}[1]
%\STATE $f(v) := p(v)$ for every node $v \in D$ except the sources and
%sink of $G$.
%\STATE $f(v) := 0$ for the sources and sink of $G$.
\FOR {$v \in D$ in topological order}
   \IF{$v$ is not a sink}
      \WHILE{$p(v) > 0$}
         \STATE let $uv$ be a dart, where $u$ comes after $v$ in
         topological order and $f(d) >0$
         \STATE $x := \min\{f(d),p(v)\}$
         \STATE $f(d) := f(d)-x$
         \STATE $p(v) := p(v)-x$
      \ENDWHILE
  \ENDIF
\ENDFOR
\end{algorithmic}
\end{algorithm}

\section*{Acknowledgments}
We thank Glencora Borradaile for pointing out that computing a maximum preflow
may be useful in solving the multiple source flow problem.

\bibliographystyle{plain}
\bibliography{long,all}

\begin{thebibliography}{10}

\bibitem{BorradaileK09}
Glencora Borradaile and Philip~N. Klein.
\newblock An {\it }({\it } log {\it }) algorithm for maximum {\it t}-flow in a
  directed planar graph.
\newblock {\em J. ACM}, 56(2), 2009.

\bibitem{Dinitz70}
E.~Dinic.
\newblock Algorithm for solution of a problem of maximum flow in networks with
  power estimation.
\newblock {\em Soviet Mathematics Doklady}, 11:1277--1280, 1970.

\bibitem{EK72}
J.~Edmonds and R.~Karp.
\newblock Theoretical improvements in algorithmic efficiency for network flow
  problems.
\newblock {\em Journal of the ACM}, 19(2):248--264, 1972.

\bibitem{FR01}
J.~Fakcharoenphol and S.~Rao.
\newblock Planar graphs, negative weight edges, shortest paths, near linear
  time.
\newblock In {\em Proceedings of the 42th Annual Symposium on Foundations of
  Computer Science}, pages 232--241, 2001.

\bibitem{FF56}
C.~Ford and D.~Fulkerson.
\newblock Maximal flow through a network.
\newblock {\em Canadian Journal of Mathematics}, 8:399--404, 1956.

\bibitem{Frederickson87}
G.~Frederickson.
\newblock Fast algorithms for shortest paths in planar graphs with
  applications.
\newblock {\em SIAM Journal on Computing}, 16:1004--1022, 1987.

\bibitem{GR98}
A.~Goldberg and S.~Rao.
\newblock Beyond the flow decomposition barrier.
\newblock {\em Journal of the ACM}, 45(5):783--797, 1998.

\bibitem{GT88}
A.~Goldberg and R.~Tarjan.
\newblock A new approach to the maximum-flow problem.
\newblock {\em Journal of the ACM}, 35(4):921--940, 1988.

\bibitem{Hassin81}
R.~Hassin.
\newblock Maximum flow in $(s,t)$ planar networks.
\newblock {\em Information Processing Letters}, 13:107, 1981.

\bibitem{HJ85}
R.~Hassin and D.~B. Johnson.
\newblock An ${O}(n \log^2 n)$ algorithm for maximum flow in undirected planar
  networks.
\newblock {\em SIAM Journal on Computing}, 14:612--624, 1985.

\bibitem{HKRS97}
M.~R. Henzinger, P.~N. Klein, S.~Rao, and S.~Subramanian.
\newblock Faster shortest-path algorithms for planar graphs.
\newblock {\em Journal of Computer and System Sciences}, 55(1):3--23, 1997.

\bibitem{ImaiIwano}
Hiroshi Imai and Kazuo Iwano.
\newblock Efficient sequential and parallel algorithms for planar minimum cost
  flow.
\newblock In {\em Algorithms}, volume 450 of {\em Lecture Notes in Computer
  Science}, pages 21--30. Springer Berlin / Heidelberg, 1990.

\bibitem{IS79}
A.~Itai and Y.~Shiloach.
\newblock Maximum flow in planar networks.
\newblock {\em SIAM Journal on Computing}, 8:135--150, 1979.

\bibitem{JV82}
D.~B. Johnson and S.~Venkatesan.
\newblock Using divide and conquer to find flows in directed planar networks in
  ${O}(n^{3/2} \log n)$ time.
\newblock In {\em Proceedings of the 20th Annual Allerton Conference on
  Communication, Control, and Computing}, pages 898--905, 1982.

\bibitem{KaplanNussbaum2009}
Haim Kaplan and Y.~Nussbaum.
\newblock Maximum flow in directed planar graphs with vertex capacities.
\newblock In {\em ESA 2009}, pages 397--407, 2009.

\bibitem{Miller86}
G.~L. Miller.
\newblock Finding small simple cycle separators for 2-connected planar graphs.
\newblock {\em Journal of Computer and System Sciences}, 32(3):265--279, 1986.

\bibitem{MN95}
G.~L. Miller and J.~Naor.
\newblock Flow in planar graphs with multiple sources and sinks.
\newblock {\em SIAM Journal on Computing}, 24(5):1002--1017, 1995.

\bibitem{Reif83}
J.~Reif.
\newblock Minimum $s$-$t$ cut of a planar undirected network in ${O}(n \log^2
  n)$ time.
\newblock {\em SIAM Journal on Computing}, 12:71--81, 1983.

\bibitem{Schrijver02}
A.~Schrijver.
\newblock On the history of the transportation and maximum flow problems.
\newblock {\em Mathematical Programming}, 91(3):437--445, 2002.

\bibitem{Weihe97}
K.~Weihe.
\newblock Maximum (s, t)-flows in planar networks in ${O}(|{V}| log |{V}|)$
  time.
\newblock {\em Journal of Computer and System Sciences}, 55(3):454--476, 1997.

\end{thebibliography}
%\bibliography{all}

\end{document}